\newtheorem{proposition}{Proposition}[section]
\newtheorem{exmp}{Example}[section]
\tikzstyle{bigblock} = [draw, fill=blue!20, rectangle, 
\tikzstyle{medblock} = [draw, fill=blue!20, rectangle, 
\tikzstyle{mux} = [draw, fill=black!20, rectangle, 
\tikzstyle{smallblock} = [draw, fill=blue!20, rectangle, 
\tikzstyle{data_block} = [draw, fill=green!20, rectangle, 
\tikzstyle{ops_block} = [draw, fill=blue!20, rectangle, 
\tikzstyle{est_block} = [draw, fill=red!20, rectangle, 
\tikzstyle{sum} = [draw, fill=blue!20, circle, node distance=1cm,minimum height=0.5cm]
\tikzstyle{signal} = [coordinate]
\tikzstyle{pinstyle} = [pin edge={to-,thin,black}]
\tikzstyle{block} = [draw, fill=blue!20, rectangle, 
\tikzstyle{blockS} = [draw, fill=blue!20, rectangle, 
\tikzstyle{input} = [coordinate]
\tikzstyle{output} = [coordinate]
\newcommand{\bc}{\begin{center}}
\newcommand{\ec}{\end{center}}
\newcommand{\benum}{\begin{enumerate}}
\newcommand{\eenum}{\end{enumerate}}
\newcommand{\nn}{\nonumber}
\newcommand{\matl}{\left[ \begin{array}}
\newcommand{\matr}{\end{array} \right]}
\newcommand{\matls}{\left[ \begin{smallmatrix}}
\newcommand{\matrs}{\end{smallmatrix} \right]}
\newcommand{\isdef}{\stackrel{\triangle}{=}}
\newcommand{\tr}{{\rm tr}\,}
\newcommand{\rmT}{{\rm T}}
\newcommand{\rmd}{{\rm d}}
\newcommand{\rms}{{\rm s}}
\newcommand{\BBE}{{\mathbb E}}
\newcommand{\BBR}{{\mathbb R}}
\newcommand{\SN}{{\mathcal N}}
\newcommand{\neweqline}{\ensuremath{\nn \\ &\quad }}
\title{\LARGE \bf
Modified Unscented Kalman Filter
}
\title{\LARGE \bf
Two Modifications of the Unscented Kalman Filter that\\ Specialize to the Kalman Filter for Linear Systems
}
\author{Ankit Goel and Dennis S. Bernstein
\thanks{Ankit Goel is a postdoctoral research fellow in the Department of Aerospace Engineering, University of Michigan, Ann Arbor, MI 48109. {\tt\small ankgoel@umich.edu}}%
\thanks{Dennis Bernstein is a Professor in the Department of Aerospace Engineering, University of Michigan, Ann Arbor, MI 48109. {\tt\small dsbaero@umich.edu}}%
}
\begin{document}

\maketitle
\thispagestyle{empty}
\pagestyle{empty}

\begin{abstract}
%
Although the unscented Kalman filter (UKF) is applicable to nonlinear systems, it turns out that, for  linear systems, UKF does not specialize to the classical Kalman filter.
This situation suggests that it may be advantageous to modify UKF in such a way that, for linear systems,   the Kalman filter is recovered.
The ultimate goal is thus to develop modifications of UKF that specialize to the Kalman filter for linear systems and have improved accuracy for nonlinear systems.
With this motivation, this paper presents two modifications of UKF that specialize to the Kalman filter for linear systems.
The first modification (EUKF-A) requires the Jacobian of the dynamics map, whereas the second modification (EUKF-C) requires the Jacobian of the measurement map.   
For various nonlinear examples, the accuracy of EUKF-A and EUKF-C  is compared to the accuracy of UKF.

\end{abstract}


\section{INTRODUCTION}

The Unscented Kalman filter (UKF) is widely applied to  nonlinear estimation problems \cite{simon2006optimal}.
UKF was introduced in \cite{wan2000unscented,van2001square} has been applied to attitude estimation  \cite{kraft2003quaternion}, 
navigation  \cite{gao2018multi}, 
battery-charge estimation \cite{he2016real}, and
state and parameter estimation in atmospheric models \cite{gove2006application}.

Like the Ensemble Kalman filter (EnKF) \cite{anderson2001ensemble}, UKF propagates an ensemble in order to compute the mean and covariance of the state estimate. 
However, unlike EnKF, which approximates the covariance using statistics of the propagated ensembles, UKF uses unscented transformations to approximate the covariances, which allows UKF to reduce the size of the ensemble to $2n+1,$ where $n$ is the dimension of the state of the system.   
Since UKF propagates the ensemble using the nonlinear dynamics map, the accuracy of UKF is expected and is also reported to be better than that of the Extended Kalman filter, which is based on the linearized dynamics \cite{st2004comparison}.


The UKF gain and covariance update are motivated by the corresponding expressions used in the Kalman filter. 
Hence, it is reasonable to expect that, in the case of a linear system, the UKF gain and the covariance update will coincide with Kalman filter. 
However, it turns out that UKF does not specialize to the classical Kalman filter when applied to a linear system. 
This is due to the fact that effect of the process noise does not pass through to the output-error covariance.
In fact, UKF output covariances and the propagated state covariance are found to be missing the process noise term when applied to a linear system, as shown in this paper.

This paper presents two extension of UKF that specialize to the Kalman filter for linear systems. 
The first extension, named Extended UKF-A (EUKF-A), uses the gradient of the dynamics map to account for the missing term, 
whereas the second extension, named Extended UKF-C (EUKF-A), uses the gradient of the measurement map to account for the missing term.
In the case of a linear system, both of these modifications are equivalent and exactly recover the Kalman filter. 
Note that the names EUKF-A and EUKF-C are motivated by the fact that these modifications use the gradient of the dynamics and the measurement map, similar to EKF.  
However, unlike EKF, EUKF-A and EUKF-C use a $2n+1$-member ensemble along with the gradient of the dynamics map and the measurement map to propagate uncertainty. 
This additional information allows EUKF-A and EUKF-C to improve the accuracy in comparison to UKF.

Since EUKF-A uses the gradient of the dynamics map and requires the computation of its inverse, the improved accuracy might not justify the additional computational cost. 
In contrast, EUKF-C uses the gradient of the measurement map, which in most applications is linear and constant, or computationally inexpensive to compute since the number of outputs is usually much smaller than the dimension of the state, and thus is potentially a low-cost extension of UKF.
Assuming that EnKF gives the true propagated covariance, the nonlinear examples considered in this paper show that both EUKF-A and EUKF-C improve the accuracy of the propagated covariance compared to classical UKF. 



This paper is organized as follows.
Section \ref{sec:summaryKF} briefly reviews the Kalman filter to introduce the terminology and notation used in this paper. 
Section \ref{sec:summaryUKF} briefly reviews UKF.
Section \ref{sec:SpecializationofUKF} applies UKF to a linear system and shows that UKF is suboptimal. 
Section \ref{sec:EUKF} proposes two extensions to the classical UKF that special to Kalman filter in the case of linear systems. 
Section \ref{sec:exmple} applies the proposed extensions to two nonlinear systems and compares the accuracy of uncertainty propagation with EKF, EnKF, and UKF. 
Finally, the paper concludes with a discussion in Section \ref{sec:conclusion}.

\section{SUMMARY OF THE KALMAN FILTER}
\label{sec:summaryKF}
This section briefly reviews the Kalman filter to introduce terminology and notation for later sections.
Consider a linear system
\begin{align}
    x_{k+1}
        &= 
            A_k x_{k} + B_k u_{k} + w_{k },
    \label{eq:LS_xkp1}
    \\
    y_{k} 
        &=
            C_k x_{k} + v_{k },
    \label{eq:LS_yk}
\end{align}
where, for all $k \ge 0$, 
$A_k, B_k, C_k$ are real matrices,
$w_k \sim \SN(0,Q_k)$ is the disturbance, and 
$v_k \sim \SN(0,R_k)$ is the sensor noise.  

For the system \eqref{eq:LS_xkp1}, \eqref{eq:LS_yk}, consider the filter 
%
%
\begin{align}
    \hat x_{k+1|k}
        &=
            A_k \hat x_{k|k} + B_k u_k,
    \label{eq:KF_x(k+1|k)}
        \\
    \hat x_{k+1|k+1}
        &=
            \hat x_{k+1|k} +
              \hat K
            (y_{k+1} - C_{k+1} \hat x_{k+1|k}),
    \label{eq:KF_x(k+1|k+1)}
\end{align}
where 
$\hat x_{k+1|k}$ is the \textit{prior estimate} at step $k+1,$
$\hat x_{k+1|k+1}$ is the \textit{posterior estimate} at step $k+1,$ and 
the gain ${\hat K} \in \BBR^{l_\eta \times l_y}$ is determined by optimization below.  

For all $k\ge0,$ define the \textit{prior error} $e_{k+1|k}$ and 
the \textit{posterior error} $e_{k|k}$ by
\begin{align}
    e_{k+1|k}
        &\isdef
            x_{k+1} - \hat x_{k+1|k},
    \label{eq:e_prior}
    \\
    e_{k+1|k+1}
        &\isdef
            x_{k+1} - \hat x_{k+1|k+1},  
    \label{eq:e_posterior}
\end{align}
and the covariances of $e_{k+1|k}$ and $e_{k+1|k+1}$ by
\begin{align}
    P_{k+1|k}
        &\isdef
            \BBE[e_{k+1|k}   e_{k+1|k}^\rmT],
    \label{eq:P_k_km1_def_2SOF}
    \\
    P_{k+1|k+1}
        &\isdef 
            \BBE[e_{k+1|k+1}   e_{k+1|k+1}^\rmT].
    \label{eq:P_k_k_def}
\end{align}
Note that, for all $k\ge0,$
    \begin{align}
    P_{k+1|k}
        &=
            A_k P_{k|k} A_k^\rmT + Q_k,
    \label{eq:2SOF_prior_cov}
        \\
    P_{k+1|k+1}
        &=
            P_{k+1|k} +
              \hat K
            \overline R_{k+1}
            \hat K^\rmT  
            \neweqline
            -   \hat K C_{k+1} P_{k+1|k}
            - P_{k+1|k} C_{k+1}^\rmT \hat K^\rmT  ,
    \label{eq:2SOF_post_cov}
\end{align}
where 
\begin{align}
    \overline R_{k+1}
        \isdef 
            C_{k+1} P_{k+1|k} C_{k+1}^\rmT + R_{k+1}.
\end{align}

The Kalman gain $K_{k+1 }^{\rm KF}$, defined by
\begin{align}
    K_{k+1 }^{\rm KF}
        \isdef
            \underset{ \hat K \in \BBR^{l_\eta \times l_y}  }{\operatorname{argmin}} \
            {\rm tr} \ P_{k+1|k+1},
\end{align}
is given by 
\begin{align}
        K_{k+1 } ^{\rm KF}
            &=
                P_{k+1|k} C_{k+1}^\rmT
                \overline R_{k+1}^{-1} ,
        \label{eq:2SOF_minimizer}
    \end{align}
and the corresponding optimized posterior covariance at step $k+1$ is given by
    \begin{align}
        &P_{k+1|k+1}
            =
                P_{k+1|k} 
                -
                P_{k+1|k} C_{k+1}^\rmT
                \overline R_{k+1}^{-1}
                C_{k+1} P_{k+1|k} .
           \label{eq:2SOF_optimized_post_cov}
    \end{align}
The \textit{Kalman filter} is \eqref{eq:KF_x(k+1|k)}, \eqref{eq:KF_x(k+1|k+1)} with $\hat K = K_{k+1}^{\rm KF}$, where $ K_{k+1}^{\rm KF}$ is given by 
\eqref{eq:2SOF_prior_cov},
\eqref{eq:2SOF_minimizer}, and \eqref{eq:2SOF_optimized_post_cov}.
    



%
%

Next, 
in order to motivate UKF, 
\eqref{eq:2SOF_post_cov}, \eqref{eq:2SOF_minimizer}, and \eqref{eq:2SOF_optimized_post_cov} 
are reformulated in terms of covariance matrices. 
For all $k\ge0$, define {\it prior output error} $z_{k+1|k}$ and the {\it posterior output error} $z_{k+1|k+1}$  by
\begin{align}
    z_{k+1|k}
        &\isdef
            C_{k+1} e_{k+1|k},
        \\
    z_{k+1|k+1} 
        &\isdef 
            C_{k+1} e_{k+1|k} + v_{k+1}.
\end{align}
%
%
Next, define the covariance of $z_{k+1|k+1}$ and
the cross-covariance of $e_{k+1|k}$ and $z_{k+1|k}$ by
\begin{align}
    P_{z_{k+1|k+1}}
        &\isdef
            \BBE[z_{k+1|k+1} z_{k+1|k+1}^\rmT],
    \label{eq:Pz_k_k_def_2SOF}
    \\
    P_{e,z_{k+1|k}}   
        &\isdef
            \BBE[e_{k+1|k}  z_{k+1|k}^\rmT],
    \label{eq:Pez_k_km1_def_2SOF}
\end{align}
which, for all $k\ge0$, satisfy
\begin{align}
    P_{z_{k+1|k+1}}
        &=
            C_{k+1} P_{k+1|k} C_{k+1}^\rmT + R_{k+1}, 
    \label{eq:2SOF_Pz}
        \\
    P_{e,z_{k+1|k}}
        &=
            P_{k+1|k} C_{k+1}^\rmT.
    \label{eq:2SOF_Pez}
\end{align}
Substituting \eqref{eq:2SOF_Pz} and \eqref{eq:2SOF_Pez} into \eqref{eq:2SOF_post_cov}, the posterior covariance at step $k+1$ can be written as
\begin{align}
    P_{k+1|k+1}
        &=
            P_{k+1|k} +
              \hat K
            P_{z_{k+1|k+1}} 
            \hat K^\rmT  
            \neweqline
            -   \hat K P_{e,z_{k+1|k}}^\rmT
            - P_{e,z_{k+1|k}} \hat K^\rmT  ,
    \label{eq:P_kp1_kp1_IC-UKF}
\end{align}
and
substituting 
\eqref{eq:2SOF_Pz} and
\eqref{eq:2SOF_Pez} in \eqref{eq:2SOF_minimizer} and
\eqref{eq:2SOF_optimized_post_cov},
the Kalman gain can be written as
\begin{align}
    K_{k+1 } ^{\rm KF}
            &=
                P_{e,z_{k+1|k}} 
                P_{z_{k+1|k+1}} ^{-1} ,
    \label{eq:KF_opt_gain_cov}
\end{align}
and the corresponding optimized posterior covariance at step $k+1$ can be written as
\begin{align}
    P_{k+1|k+1}^{\rm KF}
        &=
            P_{k+1|k} -
            K_{k+1 } ^{\rm KF}
            P_{e,z_{k+1|k}} ^\rmT
            .
    \label{eq:KF_opt_cov_cov}
\end{align}

\section{SUMMARY OF UKF}
\label{sec:summaryUKF}
Consider a system
\begin{align}
    x_{k+1}
        &= 
            f_k (x_{k}, u_{k}) + w_{k },
    \label{eq:NLS_xkp1}
    \\
    y_{k} 
        &=
            g_k( x_{k}) + v_{k },
    \label{eq:NLS_yk}
\end{align}
where, for all $k \ge 0$, 
$f_k, g_k, C_k$ are real-valued vector functions,
$w_k \sim \SN(0,Q_k)$ is the disturbance, and 
$v_k \sim \SN(0,R_k)$ is the sensor noise.  


Let $K_{k+1}^{\rm UKF}$ and $P_{k+1|k+1}^{\rm UKF}$ denote the filter gain and the posterior covariance computed by UKF.
In order to compute $K_{k+1}^{\rm UKF}$
and $P_{k+1|k+1}^{\rm UKF},$
UKF approximates the covariance matrices  
$P_{k+1|k},$
$P_{z_{k+1|k+1}},$ and
$P_{e,z_{k+1|k}}$ in \eqref{eq:KF_opt_gain_cov} and \eqref{eq:KF_opt_cov_cov} by propagating an ensemble of $2l_x+1$ sigma points. 
For all $k \ge 0$, the $i$th sigma point $\hat x_{\sigma_i,k}$ is defined as the $i$th column of the ${l_x} \times (2 {l_x}+1)$ matrix
\begin{align}
    X_{k}
        \isdef
            [
                \hat x_{k|k}  \
                &\hat x_{k|k} + p_1 \
                \cdots \
                \hat x_{k|k} + p_{l_\eta} 
                \nn \\ &\quad
                \hat x_{k|k} - p_1 \
                \cdots \
                \hat x_{k|k} - p_{l_\eta} 
            ],
    \label{eq:Xk_sigma_matrix}
\end{align}
where $p_i$ is the $i$th column of 
\begin{align}
    P_{\sigma ,k}
        \isdef
            \alpha  \sqrt{ l_x P_{k|k}^{\rm UKF}},
    \label{eq:Psigma}
\end{align}
where $\alpha\in(0,\infty)$, and $P_{k|k}^{\rm UKF}$ is the approximation of the posterior covariance given by UKF at step $k$.
Then, for all $i = 1, \ldots, 2 l_x+1,$ the sigma points are propagated as 
\begin{align}
    \hat x_{\sigma_i,k+1} 
        &=
            f_k(\hat x_{\sigma_i,k}, u_k), 
    \label{eq:x_sigma_propagated_FS}        
\end{align}
and the corresponding outputs are given by
\begin{align}
    \hat y_{\sigma_i,k+1} 
        &=
            g_{k+1}(\hat x_{\sigma_i}   ).
\end{align}

Defining 
\begin{align}
    X_{k+1|k}
        \isdef
            \matl{ccc}
                \hat x_{\sigma_1,k+1}   & \cdots    & \hat x_{\sigma_{2l_x+1},k+1} 
            \matr
            \in \BBR^{l_x \times 2 l_x+1},
    \\
    Y_{k+1}
        \isdef
            \matl{ccc}
                \hat y_{\sigma_1,k+1}   & \cdots    & \hat y_{\sigma_{2l_x+1},k+1} 
            \matr
            \in \BBR^{l_y \times 2 l_x+1},
\end{align}
the covariance matrices in \eqref{eq:KF_opt_gain_cov} and \eqref{eq:KF_opt_cov_cov} are then approximated by
\begin{align}
    P_{k+1|k}^{\rm UKF}
        &=
            \tilde X_{k+1} W_\rmd  \tilde X_{k+1}^\rmT  + Q_k,
    \label{eq:Px_UKF}
        \\
    P_{z_{k+1|k+1}}^{\rm UKF}
        &=
            \tilde Y_{k+1} W_\rmd \tilde Y_{k+1}^\rmT  + R_{k+1},
    \label{eq:Pz_UKF}
        \\
    P_{e,z_{k+1|k}}^{\rm UKF}
        &=
            \tilde X_{k+1} W_\rmd \tilde Y_{k+1}^\rmT,
    \label{eq:Pez_UKF}
\end{align}
where
\begin{align}
    \tilde X_{k+1}
        &\isdef
            X_{k+1|k} - H(X_{k+1|k} W)
    \label{eq:tilde_X_def}
    , \\
    \tilde Y_{k+1}
        &\isdef
            Y_{k+1} - 
            H(Y_{k+1} W),
    \label{eq:tilde_Y_def}
\end{align}
where, for $ x \in \BBR^n,$
\begin{align}
    H(x) 
        \isdef
            1_{1 \times 2 l_x+1 } \otimes x
            \in \BBR^{2 l_x+1 \times n}
            ,
\end{align}
and
\begin{align}
       W
        \isdef
            \dfrac{1}{2\alpha^2 l_x}
            \matl{c}
                2 (\alpha^2 -1 ) l_x \\ 
                 1_{2 l_x \times 1} \\
            \matr\in\BBR^{2l_x+1},
    \quad
    W_\rmd 
        \isdef 
            {\rm diag \ } W .
    \label{eq:UKF_W_def}
\end{align}
Note that  $X_{k+1} W$ is a weighted average of the propagated sigma points.
Therefore, the entries of $\tilde X_{k+1}$ defined by \eqref{eq:tilde_X_def} are perturbations of the weighted average determined by the propagated sigma points, and
the entries of $\tilde Y_{k+1}$ are the corresponding output perturbations.
%
%
Finally,  UKF filter gain 
and the corresponding posterior covariance are given by
\begin{align}
    K_{k+1 } ^{\rm UKF}
        &=
            P_{e,z_{k+1|k}} ^{\rm UKF}
            P_{z_{k+1|k+1}}^{{\rm UKF} ^{-1}} ,
    \label{eq:UKF_opt_PzPez}
    \\
    P_{k+1|k+1}^{\rm UKF}
        &=
        P_{k+1|k} ^{\rm UKF}
        -
        K_{k+1 } ^{\rm UKF}
        P_{e,z_{k+1|k}}^{{\rm UKF}^\rmT}
        ,
   \label{eq:UKF_P_post_PzPez}
\end{align}
and the prior estimate $\hat x_{k+1|k}$ and posterior estimate $\hat x_{k+1|k+1}$ are given by
\begin{align}
    \hat x_{k+1|k}
        &=
            X_{k+1|k} W,
        \label{eq:UKF_prior_update}
        \\
    \hat x_{k+1|k+1}
        &=
            \hat x_{k+1|k} +
            K_{k+1 } ^{\rm UKF}
            (y_{k+1} - Y_{k+1} W).
        \label{eq:UKF_post_update}
\end{align}
Note that \eqref{eq:UKF_opt_PzPez} and \eqref{eq:UKF_P_post_PzPez} are similar in form to \eqref{eq:KF_opt_gain_cov} and \eqref{eq:KF_opt_cov_cov}.

\section{SPECIALIZATION OF UKF TO LINEAR SYSTEMS}
\label{sec:SpecializationofUKF}
The following result shows that UKF does not specialize to the Kalman filter when applied to a linear system.

\begin{proposition}
    \label{prop:apply_UKF_to_LS}
    Consider a linear system \eqref{eq:LS_xkp1}, \eqref{eq:LS_yk}.
    For all $k\ge0,$ let $P_{k+1|k+1}$ be the posterior covariance given by Kalman filter
    and 
    let $P_{k+1|k+1}^{\rm UKF}$ be the posterior covariance given by  UKF.
    Let $k\ge 0,$ and assume that 
    \begin{align}
        P_{k|k}=P_{k|k}^{\rm UKF} 
        ,
        \label{eq:Pbar_assumption}
    \end{align}
    $Q_k \neq 0,$ and $C_k \notin \SN(Q_k).$
    Then, 
    \begin{align}
        P_{k+1|k+1}^{\rm UKF}
            \neq 
                P_{k+1|k+1}.
        \label{eq:UKFP_neq_KFP}
    \end{align}

    Furthermore, denote the posterior covariance at step $k+1$ obtained with gain ${\hat K} \in \BBR^{l_\eta \times l_y}$ by
    \begin{align}
        P(\hat K) 
            &=
                P_{k+1|k} +
              \hat K
            P_{z_{k+1|k+1}}^\rmT
            \hat K^\rmT  
            \neweqline
            -   \hat K P_{e,z_{k+1|k}}^\rmT
            - P_{e,z_{k+1|k}} \hat K^\rmT  .
        \label{eq:PatK}
    \end{align}
    Then, 
    \begin{align}
        P_{k+1|k+1}^{\rm UKF}
            \neq 
                P(K_{k+1}^{\rm UKF}), 
        \label{eq:UKFP_neq_PatUKF}
    \end{align}
    and
    \begin{align}
        \tr P(K_{k+1}^{\rm KF})
            \leq 
                \tr P(K_{k+1}^{\rm UKF}).
        \label{eq:traceKF_leq_traceUKF}
    \end{align}

\end{proposition}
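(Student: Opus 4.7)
The plan is to compute all UKF quantities explicitly for the linear system \eqref{eq:LS_xkp1}, \eqref{eq:LS_yk}, to show that the approximate output covariance and cross-covariance produced by UKF each omit a process-noise contribution, and then to propagate these discrepancies through the gain and posterior-covariance formulas.

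First, I would verify two structural identities for the sigma points. Using the symmetric form of \eqref{eq:Xk_sigma_matrix} and the weights in \eqref{eq:UKF_W_def}, the weighted mean satisfies $X_k W = \hat x_{k|k}$ and the weighted outer product satisfies $\tilde X_k W_\rmd \tilde X_k^\rmT = P_{k|k}^{\rm UKF}$; the latter follows from the $\pm p_i$ cancellation together with the identity $P_{\sigma,k} P_{\sigma,k}^\rmT = \alpha^2 l_x P_{k|k}^{\rm UKF}$. Linearity of $f_k$ then gives $X_{k+1|k} = A_k X_k + B_k u_k 1^\rmT$, hence $\tilde X_{k+1} = A_k \tilde X_k$, and linearity of the measurement map, $g_{k+1}(x) = C_{k+1} x$, gives $\tilde Y_{k+1} = C_{k+1} \tilde X_{k+1}$. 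Substituting into \eqref{eq:Px_UKF}--\eqref{eq:Pez_UKF} and invoking \eqref{eq:Pbar_assumption}, I obtain
\begin{align*}
    P_{k+1|k}^{\rm UKF} &= A_k P_{k|k} A_k^\rmT + Q_k, \\
    P_{z_{k+1|k+1}}^{\rm UKF} &= C_{k+1} A_k P_{k|k} A_k^\rmT C_{k+1}^\rmT + R_{k+1}, \\
    P_{e,z_{k+1|k}}^{\rm UKF} &= A_k P_{k|k} A_k^\rmT C_{k+1}^\rmT.
\end{align*}
Comparing with \eqref{eq:2SOF_Pz} and \eqref{eq:2SOF_Pez} shows that UKF misses the term $C_{k+1} Q_k C_{k+1}^\rmT$ from the output covariance and the term $Q_k C_{k+1}^\rmT$ from the cross-covariance, while its prior covariance agrees with the Kalman-filter prior. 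The hypotheses $Q_k \neq 0$ and $C_k \notin \SN(Q_k)$ ensure both omitted terms are nonzero.

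For \eqref{eq:UKFP_neq_KFP}, I would feed the UKF covariances above into \eqref{eq:UKF_opt_PzPez}--\eqref{eq:UKF_P_post_PzPez}, observing that $K_{k+1}^{\rm UKF}$ uses $A_k P_{k|k} A_k^\rmT$ as its effective prior instead of $A_k P_{k|k} A_k^\rmT + Q_k$. Subtracting $P_{k+1|k+1}$ from $P_{k+1|k+1}^{\rm UKF}$ and simplifying with the matrix-inversion identity for $\overline R_{k+1}^{-1}$ yields a residual that is a nonzero quadratic form in $Q_k C_{k+1}^\rmT$. For \eqref{eq:UKFP_neq_PatUKF}, I would evaluate $P(K_{k+1}^{\rm UKF})$ from \eqref{eq:PatK} using the true covariances in \eqref{eq:2SOF_Pz} and \eqref{eq:2SOF_Pez}, contrast with $P_{k+1|k+1}^{\rm UKF}$ from \eqref{eq:UKF_P_post_PzPez} (which uses the UKF approximations), and, using the gain identity $K_{k+1}^{\rm UKF} P_{z_{k+1|k+1}}^{\rm UKF} = P_{e,z_{k+1|k}}^{\rm UKF}$, reduce the difference to a combination of terms built solely from $Q_k C_{k+1}^\rmT$, which does not vanish.

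Inequality \eqref{eq:traceKF_leq_traceUKF} is then immediate from the defining optimality of the Kalman gain: $K_{k+1}^{\rm KF}$ is by construction the minimizer of $\tr P(\hat K)$ over $\hat K \in \BBR^{l_\eta \times l_y}$, so any other choice, including $K_{k+1}^{\rm UKF}$, yields a trace that is at least as large. The main obstacle is verifying \textbf{strict} inequality in \eqref{eq:UKFP_neq_KFP} and \eqref{eq:UKFP_neq_PatUKF} rather than merely generic inequality: the assumption $C_k \notin \SN(Q_k)$ is essential, and the bookkeeping of $Q_k C_{k+1}^\rmT$ contributions through the matrix-inversion lemma requires care to rule out spurious cancellation.
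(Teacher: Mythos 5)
Your proposal follows essentially the same route as the paper's proof: exploit the $\pm p_i$ symmetry of the sigma points and the linearity of $f_k$ and $g_k$ to get $\tilde X_{k+1}=A_k\tilde X_k$ and $\tilde Y_{k+1}=C_{k+1}\tilde X_{k+1}$, conclude that $P_{z_{k+1|k+1}}^{\rm UKF}$ and $P_{e,z_{k+1|k}}^{\rm UKF}$ are missing the terms $C_{k+1}Q_kC_{k+1}^\rmT$ and $Q_kC_{k+1}^\rmT$ while $P_{k+1|k}^{\rm UKF}=P_{k+1|k}$, and obtain \eqref{eq:traceKF_leq_traceUKF} from the defining optimality of $K_{k+1}^{\rm KF}$. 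If anything you are slightly more careful than the paper at the last step, since the paper passes from the mismatch of the ingredient covariances directly to \eqref{eq:UKFP_neq_KFP} and \eqref{eq:UKFP_neq_PatUKF} without explicitly ruling out the cancellation you flag.
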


\begin{proof}
Note that, for $i = 1, \ldots, 2 l_x+1,$ 
\begin{align}
    \hat x_{\sigma_i,k+1} 
        &=
            A_k \hat x_{\sigma_i,k} + B_k u_{k}, 
        \nn 
    \\
    \hat y_{\sigma_i,k+1} 
        &=
            C_{k+1} \hat x_{\sigma_i,k+1},
        \nn
\end{align}
and thus
\begin{align}
    X_{k+1|k}
        &=
            A_k X_k + H(  B_k u_k ),
        \nn \\
    Y_{k+1} 
        &= C_{k+1} X_{k+1|k}. \nn
\end{align}
Next, noting that $X_k W = \hat x_{k|k}$ and sum of entries of $W$ is one, it follows that
\begin{align}
    X_{k+1|k} W
        &=
            A_k \hat x_{k|k} + B_k u_k ,
        \nn \\
    Y_{k+1} W 
        &=
            C_{k+1} A_k \hat x_{k|k} + C_{k+1} B_k u_k   ,
        \nn 
\end{align}
and thus
\begin{align}
    \tilde X_{k+1} 
        &=
            A_k X_k 
            - H( A_k \hat x_{k|k} )
        \nn 
        \\
        &=
            A_k 
            \matl{ccc}
                0 &
                \alpha\sqrt{l_x P_{k|k}^{\rm UKF}  } &
                -\alpha\sqrt{l_x P_{k|k}^{\rm UKF}}
            \matr,
    \label{eq:tildeX_lin_sys}
        \\
    \tilde Y_{k+1} 
        &=
            Y_{k+1} - H(Y_{k+1}W)
        \nn
        \\
        &=
            C_{k+1} \tilde X_{k+1}.
        \label{eq:tildeY_lin_sys}
\end{align}

Using \eqref{eq:2SOF_prior_cov}, \eqref{eq:Pbar_assumption} and \eqref{eq:tildeX_lin_sys}, it follows from \eqref{eq:Px_UKF} that 
%
\begin{align}
    P_{k+1|k}^{\rm UKF}
        &=
            A_k 
            \matl{ccc}
                0 &
                \alpha\sqrt{l_x P_{k|k}} &
                -\alpha\sqrt{l_x P_{k|k}}
            \matr
            W_\rmd
            \nn \\ &\quad 
            \cdot \matl{ccc}
                0 &
                \alpha\sqrt{l_x P_{k|k}} &
                -\alpha\sqrt{l_x P_{k|k}}
            \matr^\rmT 
            A_k^\rmT 
            + Q_k
        \nn \\
        &=
            A_k 
            P_{k|k}
            A_k^\rmT 
            + Q_k
            \nn
        \\
        &=
            P_{k+1|k} \nn.
\end{align}
Using \eqref{eq:2SOF_prior_cov}, \eqref{eq:2SOF_Pz} and \eqref{eq:tildeY_lin_sys}, it follows from \eqref{eq:Pz_UKF} that 
\begin{align}
    P_{z_{k+1|k+1}}^{\rm UKF}
        &=
            C_{k+1} \tilde X_{k+1} W_\rmd \tilde X_{k+1}^\rmT C_{k+1}^\rmT  + R_{k+1}
            \nn
        \\
        &=
            C_{k+1} A_k P_{k|k}  A_k^\rmT C_{k+1}^\rmT + R_{k+1}
            \nn
        \\
        &=
            C_{k+1} (P_{k+1|k} - Q_k ) C_{k+1}^\rmT + R_{k+1}
            \nn
        \\
        &=
            C_{k+1} P_{k+1|k} C_{k+1}^\rmT 
            - C_{k+1} Q_k C_{k+1}^\rmT 
            + R_{k+1}
            \nn
        \\
        &=
            P_{z_{k+1|k+1}} - C_{k+1} Q_k C_{k+1}^\rmT .
        \label{eq:UKF_Pz}
\end{align}
Using \eqref{eq:2SOF_prior_cov} and \eqref{eq:2SOF_Pez}, it follows from \eqref{eq:Pez_UKF} that 
\begin{align}
    P_{e,z_{k+1|k}}^{\rm UKF}
        &=
            \tilde X_{k+1} W_\rmd \tilde Y_{k+1}^\rmT
        \nn 
        \\
        &=
            \tilde X_{k+1} W_\rmd \tilde X_{k+1}^\rmT C_{k+1}^\rmT 
        \nn 
        \\
        &=
            A_k P_{k|k} A_k^\rmT C_{k+1}^\rmT 
        \nn 
        \\
        &=
            P_{k+1|k} C_{k+1}^\rmT - Q_k C_{k+1}^\rmT 
        \nn
        \\
        &=
            P_{e,z_{k+1|k}} - Q_k C_{k+1}^\rmT .
    \label{eq:UKF_Pez}
\end{align}
Since $Q_k \neq 0$ and $C_k \notin \SN(Q_k),$ it follows that 
$P_{z_{k+1|k+1}}^{\rm UKF} \neq P_{z_{k+1|k+1}}$ and 
$P_{e,z_{k+1|k}}^{\rm UKF} \neq P_{e,z_{k+1|k}}$
are missing $C_{k+1} Q_k C_{k+1}^\rmT$ and $Q_k C_{k+1}^\rmT$, respectively, 
, thus implying \eqref{eq:UKFP_neq_KFP}.
Next, substituting \eqref{eq:UKF_Pz} and \eqref{eq:UKF_Pez} in \eqref{eq:UKF_P_post_PzPez} proves \eqref{eq:UKFP_neq_PatUKF}.

To prove \eqref{eq:traceKF_leq_traceUKF}, note that 
\begin{align}
    K_{k+1 } ^{\rm UKF}
        &=
            (P_{e,z_{k+1|k}} - Q_k C_{k+1}^\rmT)
            \nn \\ &\quad 
            \cdot (P_{z_{k+1|k+1}} - C_{k+1} Q_k C_{k+1}^\rmT) ^{-1} ,
        \nn \\
        &\neq 
            K_{k+1}^{\rm KF}.
        \label{eq:KKFneqKUKF}
\end{align}
Finally, since $K_{k+1}^{\rm KF}$ minimizes $\tr P_{k+1|k+1}$,
\eqref{eq:KKFneqKUKF} implies \eqref{eq:traceKF_leq_traceUKF}.
%
%
\end{proof}


Note that, in a linear system, the UKF prior and posterior updates given by \eqref{eq:UKF_prior_update} and \eqref{eq:UKF_post_update} reduce to \eqref{eq:KF_x(k+1|k)} and \eqref{eq:KF_x(k+1|k+1)}, where $\hat K = K_{k+1}^{\rm UKF}.$
However, Proposition \ref{prop:apply_UKF_to_LS} implies that, in a linear system  where disturbance is not zero, UKF does not reduce to Kalman filter.
That is, the posterior covariance propagated by UKF is not equal to the covariance defined by \eqref{eq:P_k_k_def}.
%
%
Finally, note that, in linear systems, the choice of $\alpha$ does not affect $K_{k+1}^{\rm UKF}$ and $P_{k+1|k+1}^{\rm UKF}.$

Furthermore, the covariance corresponding to the gain $K_{k+1}^{\rm UKF}$  is, in fact, $P(K_{k+1}^{\rm UKF})$ given by \eqref{eq:PatK}, which is not equal to $P_{k+1|k+1}^{\rm UKF}.$ 
As shown in the next example, $\tr P_{k+1|k+1}^{\rm UKF}$ can be smaller than $\tr P(K_{k+1}^{\rm KF}),$ which is impossible. 
This apparent contradiction is due to the fact that UKF uses incorrect equation to update the posterior covariance. 

\begin{exmp}
    Consider a linear system where, for all $k\ge 0,$ 
    \begin{align}
        A_k
            &=
                \matl{cc}
                    2.4 & 2.1 \\
                    0 & -0.7
                \matr, 
        C_k 
            =
                \matl{cc}
                    -0.4 \ -0.9
                \matr,
    \end{align}
    $Q_k = 1,$ and  $R_k=1.$
    Let $x_0 = [1 \ 1]^\rmT$ and  $P_{0|0} = I_2$.
    Note that ${\rm mspec} (A) = \{2.4, -0.7 \}$ and $(A,C)$ is detectable. 
    In this case, 
    \begin{align}
        &\tr P_{1|1} = \tr P(K_{1}^{\rm KF})     = 9.079, \\
        &\tr P_{1|1}^{\rm UKF}     = 8.816, \\
        &\tr P(K_{1}^{\rm UKF})     = 9.730.
    \end{align}
    Note that the trace of UKF posterior covariance is smaller than the trace of KF posterior covariance, which is clearly a contradiction, since posterior covariance given by Kalman filter is optimal. 
    The true covariance corresponding to the UKF gain is in fact larger than the KF posterior covariance. 
    $\hfill\mbox{\Large$\diamond$}$
\end{exmp}

\begin{exmp}
    \label{exmp:lin_exmp}
    Consider a linear system where, for all $k\ge 0$ 
    \begin{align}
        A_k
            &=
                \matl{cc}
                    1.6 & -1 \\
                    1 & 0
                \matr, 
        C_k 
            =
                \matl{cc}
                    1 \ -0.3
                \matr,
    \end{align}
    $Q_k = 0.1,$ and  $R_k=0.1.$
    Let $x_0 = [1 \ 1]^\rmT$ and  $P_{0|0} = I_2$.
    Note that ${\rm mspec} (A) = \{0.8+0.6\jmath, 0.8-0.6\jmath \}$ and $(A,C)$ is detectable. 
    Figure shows the trace of 
    $P_{k|k} = \tr P(K_{k}^{\rm KF}) $ and 
    $ P_{k|k}^{\rm UKF}.$
    Clearly, for $k>1,$ $P_{k|k} \neq P_{k|k}^{\rm UKF}.$
    $\hfill\mbox{\Large$\diamond$}$
\end{exmp}

\begin{figure}[h]
    \centering
    \includegraphics[width = 1\columnwidth]{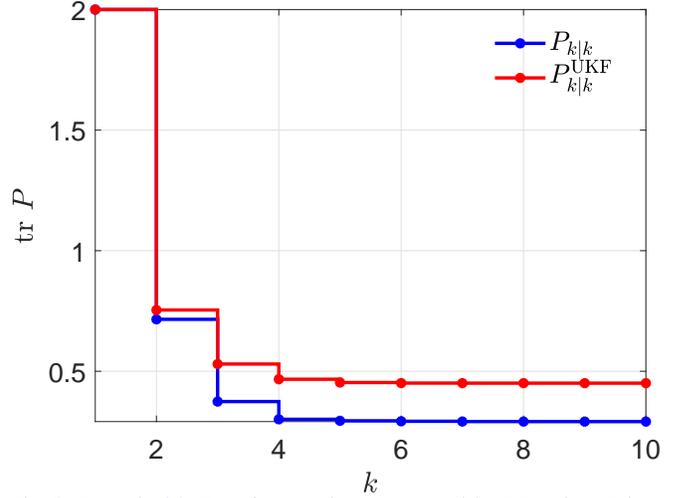}
    \caption{
    Example \ref{exmp:lin_exmp}.
    Posterior covariance computed by KF and UKF in a linear system. 
    Note that, for $k>1,$ $P_{k|k} \neq P_{k|k}^{\rm UKF}.$
    }
    \label{fig:MUKF_Exmp2}
\end{figure}


\section{TWO MODIFICATIONS TO UKF}
\label{sec:EUKF}

As shown in the previous section, the covariances $P_{z_{k+1|k+1}}$ and $P_{e,z_{k+1|k}}$ in \eqref{eq:UKF_Pz} and \eqref{eq:UKF_Pez}  are missing terms that depend on the disturbance statistics $Q_k$, thus preventing UKF from specializing to the Kalman filter for linear systems.
To remedy this omission, this section presents two modifications of the UKF algorithm,
namely
Extended UKF-A (EUKF-A) and 
Extended UKF-C (EUKF-C),
both of which specialize to the Kalman filter for linear systems.
In both of these modification, the UKF covariance matrices \eqref{eq:Px_UKF}-\eqref{eq:Pez_UKF} are modified such that they specialize to \eqref{eq:2SOF_prior_cov}, \eqref{eq:2SOF_Pz}, and \eqref{eq:2SOF_Pez} in the case of linear systems.

\begin{table*}[t]
    \centering
    \begin{tabular}{|c|c|c|c|}
        \hline
        Variable & UKF & EUKF-A & EUKF-C \\
        \hline
            $P_{\sigma,k}$ &
            $\alpha  \sqrt{ l_x (P_{k|k}^{\rm UKF})}$ &
            $\alpha  \sqrt{ l_x (P_{k|k}^{\rm EUKFA}+A_k^{-1} Q_k A_k^{\rmT^{-1}})}$ & 
            $\alpha  \sqrt{ l_x (P_{k|k}^{\rm EUKFC})}$  
            \\
        & \eqref{eq:Psigma} & \eqref{eq:Psigma_MUKF1} & \eqref{eq:Psigma_MUKF2} 
            \\
        \hline
            $P_{k+1|k}$ &
            $\tilde X_{k+1} W_\rmd  \tilde X_{k+1}^\rmT  + Q_k$ &
            $\tilde X_{k+1} W_\rmd  \tilde X_{k+1}^\rmT$ &
            $\tilde X_{k+1} W_\rmd  \tilde X_{k+1}^\rmT  + Q_k$ 
            \\
        & \eqref{eq:Px_UKF} & \eqref{eq:Px_MUKF1} & \eqref{eq:Px_MUKF2}
            \\
        \hline
            $P_{ez,_{k+1|k}}$ &
            $\tilde X_{k+1} W_\rmd  \tilde Y_{k+1}^\rmT  + R_{k+1}$ &
            $\tilde X_{k+1} W_\rmd  \tilde Y_{k+1}^\rmT  + R_{k+1}$ &       
            $\tilde X_{k+1} W_\rmd  \tilde Y_{k+1}^\rmT  + C_{k+1} Q_k C_{k+1}^\rmT + R_{k+1}$ 
            \\
        & \eqref{eq:Pz_UKF} & \eqref{eq:Pz_MUKF1} & \eqref{eq:Pz_MUKF2}
            \\
        \hline
            $P_{z_{k+1|k+1}}$ &  
            $\tilde Y_{k+1} W_\rmd  \tilde Y_{k+1}^\rmT$ &
            $\tilde Y_{k+1} W_\rmd  \tilde Y_{k+1}^\rmT$ &
            $\tilde Y_{k+1} W_\rmd  \tilde Y_{k+1}^\rmT + Q_k C_{k+1}^\rmT$ 
            \\
        & \eqref{eq:Pez_UKF} & \eqref{eq:Pez_MUKF1} & \eqref{eq:Pez_MUKF2}
            \\
        \hline
    \end{tabular}
    \caption{Covariance matrices used in UKF, EUKF-A, and EUKF-C.}
    \label{tab:UKF_mods}
\end{table*}

\subsection{Extended UKF-A}

Assuming, for all $k\ge 0,$ $A_k$ is nonsingular, EUKF-A modifies the sigma points to account for the missing terms in \eqref{eq:Px_UKF}-\eqref{eq:Pez_UKF} as shown below.  
Letting $P_{k|k}^{\rm EUKFA}$ denote the posterior covariance at step $k$, 
define 
\begin{align}
    P_{\sigma,k} ^{\rm EUKFA}
        \isdef 
            \alpha  \sqrt{ l_x (P_{k|k}^{\rm EUKFA}+A_k^{-1} Q_k A_k^{\rmT^{-1}})}
            .
    \label{eq:Psigma_MUKF1}
\end{align}
The sigma points in EUKF-A are then given by \eqref{eq:Xk_sigma_matrix},  
where $p_i$ is the $i$th column of $P_{\sigma,k} ^{\rm EUKFA}$.
With the modified sigma points, define 
\begin{align}
    P_{k+1|k}^{\rm EUKFA}
        &\isdef
            \tilde X_{k+1} W_\rmd  \tilde X_{k+1}^\rmT , 
    \label{eq:Px_MUKF1}
        \\
    P_{z_{k+1|k+1}}^{\rm EUKFA}
        &\isdef
            \tilde Y_{k+1} W_\rmd \tilde Y_{k+1}^\rmT  + R_{k+1},
    \label{eq:Pz_MUKF1}
        \\
    P_{e,z_{k+1|k}}^{\rm EUKFA}
        &\isdef
            \tilde X_{k+1} W_\rmd \tilde Y_{k+1}^\rmT,
    \label{eq:Pez_MUKF1}
\end{align}
where 
$\tilde X_{k+1}$ and
$\tilde Y_{k+1}$ are given by \eqref{eq:tilde_X_def} and \eqref{eq:tilde_Y_def}. 
The filter gain and the posterior covariance are given by
\begin{align}
    K_{k+1 } ^{\rm EUKFA}
        &=
            P_{e,z_{k+1|k}} ^{\rm EUKFA}
            P_{z_{k+1|k+1}}^{{\rm EUKFA} ^{-1}} ,
    \label{eq:MUKF1_opt_PzPez}
    \\
    P_{k+1|k+1}^{\rm EUKFA}
        &=
        P_{k+1|k} ^{\rm EUKFA}
        -
        P_{e,z_{k+1|k}}^{\rm EUKFA}
        P_{z_{k+1|k+1}}^{{\rm EUKFA}^{-1}}
        P_{e,z_{k+1|k}}^{{\rm EUKFA}^\rmT}
        .
   \label{eq:MUKF1_P_post_PzPez}
\end{align}
Finally, 
the prior estimate $\hat x_{k+1|k}$ is given by \eqref{eq:UKF_prior_update} and 
the posterior estimate $\hat x_{k+1|k+1}$ is given by
\begin{align}
    \hat x_{k+1|k+1}
        &=
            \hat x_{k+1|k} +
            K_{k+1 } ^{\rm EUKFA}
            (y_{k+1} - Y_{k+1} W).
        \label{eq:MUKF1_post_update}
\end{align}

The next result shows that EUKF-A reduces to KF in the case of a linear system.  
\begin{proposition}
    \label{prop:MUKF1_is_KF}
    Consider a linear system \eqref{eq:LS_xkp1}, \eqref{eq:LS_yk}.
    For all $k\ge0,$ let $P_{k+1|k+1}$ be the posterior covariance given by Kalman filter
    and 
    let $P_{k+1|k+1}^{\rm EUKFA}$ be the posterior covariance given by EUKF-A.
    Let $k\ge 0,$ and assume that 
    \begin{align}
        P_{k|k}=P_{k|k}^{\rm EUKFA} 
        .
        \label{eq:PMUKF1_assumption}
    \end{align}
    Then, 
    \begin{align}
        K_{k+1}^{\rm EUKFA}
            &= 
                K_{k+1},
        \label{eq:MUKF1K_eq_KFP}
            \\
        P_{k+1|k+1}^{\rm EUKFA}
            &= 
                P_{k+1|k+1}.
        \label{eq:MUKF1P_eq_KFP}
    \end{align}
\end{proposition}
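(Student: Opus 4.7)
The plan is to mimic the computational strategy used in the proof of Proposition 1, but with the inflated sigma points defined by \eqref{eq:Psigma_MUKF1}. The key observation driving the design of EUKF-A is that the inflation term $A_k^{-1} Q_k A_k^{\rmT^{-1}}$ is engineered so that, after propagation by the linear dynamics, the subsequent $A_k$-conjugation produces exactly the $Q_k$ term identified in Proposition 1 as missing from standard UKF.

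First, I would exploit linearity of $f_k$ in exactly the manner of the proof of Proposition 1 to conclude that $\tilde X_{k+1} = A_k [\,0 \ \ P_{\sigma,k}^{\rm EUKFA} \ \ -P_{\sigma,k}^{\rm EUKFA}\,]$. Because the first column of this matrix is zero and the remaining $2l_x$ columns receive equal weight $\tfrac{1}{2\alpha^2 l_x}$ in $W_\rmd$, a direct calculation using \eqref{eq:UKF_W_def} together with the identity $P_{\sigma,k}^{\rm EUKFA} (P_{\sigma,k}^{\rm EUKFA})^\rmT = \alpha^2 l_x (P_{k|k}^{\rm EUKFA} + A_k^{-1} Q_k A_k^{\rmT^{-1}})$ gives $\tilde X_{k+1} W_\rmd \tilde X_{k+1}^\rmT = A_k(P_{k|k}^{\rm EUKFA} + A_k^{-1} Q_k A_k^{\rmT^{-1}}) A_k^\rmT = A_k P_{k|k} A_k^\rmT + Q_k$, where the hypothesis \eqref{eq:PMUKF1_assumption} has been used in the last step. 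By \eqref{eq:Px_MUKF1} and \eqref{eq:2SOF_prior_cov}, this establishes $P_{k+1|k}^{\rm EUKFA} = P_{k+1|k}$.

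Second, linearity of $g_{k+1}$ yields $\tilde Y_{k+1} = C_{k+1} \tilde X_{k+1}$ exactly as in the proof of Proposition 1. Combining this with the first step and with \eqref{eq:Pz_MUKF1} and \eqref{eq:Pez_MUKF1} gives $P_{z_{k+1|k+1}}^{\rm EUKFA} = C_{k+1} P_{k+1|k} C_{k+1}^\rmT + R_{k+1} = P_{z_{k+1|k+1}}$ and $P_{e,z_{k+1|k}}^{\rm EUKFA} = P_{k+1|k} C_{k+1}^\rmT = P_{e,z_{k+1|k}}$. With all three covariances matching their Kalman counterparts, the gain and posterior covariance expressions \eqref{eq:MUKF1_opt_PzPez} and \eqref{eq:MUKF1_P_post_PzPez} reduce to \eqref{eq:KF_opt_gain_cov} and \eqref{eq:KF_opt_cov_cov}, yielding both \eqref{eq:MUKF1K_eq_KFP} and \eqref{eq:MUKF1P_eq_KFP}.

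There is no deep obstacle: the inflation is constructed precisely so that the $A_k$-conjugation produces the missing $Q_k$, and the remainder of the argument recycles the bookkeeping from the proof of Proposition 1. The only minor subtlety worth flagging is that $P_{k|k}^{\rm EUKFA} + A_k^{-1} Q_k A_k^{\rmT^{-1}}$ must be positive semidefinite so that the square root in \eqref{eq:Psigma_MUKF1} is well-defined, which follows immediately from the standing nonsingularity assumption on $A_k$ together with positive semidefiniteness of $P_{k|k}^{\rm EUKFA}$ and $Q_k$.
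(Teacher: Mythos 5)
Your proposal is correct and follows essentially the same route as the paper's own proof: compute $\tilde X_{k+1}$ using linearity, observe that the $A_k$-conjugation of the inflated sigma-point covariance yields $A_k P_{k|k}A_k^\rmT + Q_k = P_{k+1|k}$, propagate this through $\tilde Y_{k+1} = C_{k+1}\tilde X_{k+1}$ to match $P_{z_{k+1|k+1}}$ and $P_{e,z_{k+1|k}}$, and conclude. Your closing remark on positive semidefiniteness of $P_{k|k}^{\rm EUKFA}+A_k^{-1}Q_kA_k^{\rmT^{-1}}$ is a small addition the paper does not make explicit, but it does not change the argument.
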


\begin{proof}
Note that 
\begin{align}
    \tilde X_{k+1} 
        &=
            A_k 
            \Bigg[
                0 \
                \alpha\sqrt{l_x (P_{k|k}^{\rm EUKFA}+A_k^{-1} Q_k A_k^{\rmT^{-1}})} 
            \nn \\ &\quad \quad \quad \quad 
                -\alpha\sqrt{l_x (P_{k|k}^{\rm EUKFA}+A_k^{-1} Q_k A_k^{\rmT^{-1}})}
            \Bigg]
    \nn
\end{align}
and thus
\begin{align}
    P_{k+1|k}^{\rm EUKFA}
        &=
            A_k
            (P_{k|k}^{\rm EUKFA}+A_k^{-1} Q_k A_k^{\rmT^{-1}})
            A_k^\rmT
        \nn 
        \\
        &=
            A_k
            P_{k|k}^{\rm EUKFA}
            A_k^\rmT
            +
            Q_k
        \nn 
        \\
        &=
            P_{k+1|k},
        \label{eq:MUKF1_P_is_KF}
        \\
    P_{z_{k+1|k+1}}^{\rm EUKFA}
        &=
            C_{k+1} \tilde X_{k+1} W_\rmd \tilde X_{k+1}^\rmT C_{k+1}^\rmT  + R_{k+1}
            \nn
        \\
        &=
            C_{k+1} (A_k
            P_{k|k}^{\rm EUKFA}
            A_k^\rmT
            +
            Q_k) C_{k+1}^\rmT + R_{k+1}
            \nn
        \\
        &=
            C_{k+1} P_{k+1|k}  C_{k+1}^\rmT + R_{k+1}
            \nn
        \\
        &=
            P_{z_{k+1|k+1}} ,
        \label{eq:MUKF1_Pz_is_KF}
    \\
    P_{e,z_{k+1|k}}^{\rm UKF}
        &=
            \tilde X_{k+1} W_\rmd \tilde Y_{k+1}^\rmT
        \nn 
        \\
        &=
            \tilde X_{k+1} W_\rmd \tilde X_{k+1}^\rmT C_{k+1}^\rmT 
        \nn 
        \\
        &=
            P_{k+1|k} C_{k+1}^\rmT 
        \nn
        \\
        &=
            P_{e,z_{k+1|k}}.
    \label{eq:MUKF1_Peq_is_KF}
\end{align}
Equations \eqref{eq:MUKF1_P_is_KF}-\eqref{eq:MUKF1_Peq_is_KF} immidiately imply  \eqref{eq:MUKF1K_eq_KFP} and \eqref{eq:MUKF1P_eq_KFP}.
\end{proof}

\subsection{Extended UKF-C}
Using $C_k,$ EUKF-C adds the missing terms in \eqref{eq:Pz_UKF}, \eqref{eq:Pez_UKF} as shown below.
%
Letting $P_{k|k}^{\rm EUKFC}$ denote the posterior covariance at step $k,$
define the sigma points by \eqref{eq:Xk_sigma_matrix},  
where $p_i$ is the $i$th column of 
\begin{align}
    P_{\sigma,k} ^{\rm EUKFC}
        \isdef 
            \alpha  \sqrt{ l_x P_{k|k}^{\rm EUKFC}}
            .
    \label{eq:Psigma_MUKF2}
\end{align}
Next, define
\begin{align}
    P_{k+1|k}^{\rm EUKFC}
        &\isdef
            \tilde X_{k+1} W_\rmd  \tilde X_{k+1}^\rmT  + Q_k,
    \label{eq:Px_MUKF2}
        \\
    P_{z_{k+1|k+1}}^{\rm EUKFC}
        &\isdef
            \tilde Y_{k+1} W_\rmd \tilde Y_{k+1}^\rmT 
            + C_{k+1} Q_k C_{k+1}^\rmT
            + R_{k+1} ,
    \label{eq:Pz_MUKF2}
        \\
    P_{e,z_{k+1|k}}^{\rm EUKFC}
        &\isdef
            \tilde X_{k+1} W_\rmd \tilde Y_{k+1}^\rmT
            + Q_k C_{k+1}^\rmT
            ,
    \label{eq:Pez_MUKF2}
\end{align}
where 
$\tilde X_{k+1}$ and
$\tilde Y_{k+1}$ are given by \eqref{eq:tilde_X_def} and \eqref{eq:tilde_Y_def}. 
The filter gain and the posterior covariance are given by
\begin{align}
    K_{k+1 } ^{\rm EUKFC}
        &=
            P_{e,z_{k+1|k}} ^{\rm EUKFC}
            P_{z_{k+1|k+1}}^{{\rm EUKFC} ^{-1}} ,
    \label{eq:MUKF2_opt_PzPez}
    \\
    P_{k+1|k+1}^{\rm EUKFC}
        &=
        P_{k+1|k} ^{\rm EUKFC}
        -
        P_{e,z_{k+1|k}}^{\rm EUKFC}
        P_{z_{k+1|k+1}}^{{\rm EUKFC}^{-1}}
        P_{e,z_{k+1|k}}^{{\rm EUKFC}^\rmT}
        .
   \label{eq:MUKF2_P_post_PzPez}
\end{align}
Finally, the prior estimate $\hat x_{k+1|k}$ is given by \eqref{eq:UKF_prior_update} and the posterior estimate $\hat x_{k+1|k+1}$ is given by
\begin{align}
    \hat x_{k+1|k+1}
        &=
            \hat x_{k+1|k} +
            K_{k+1 } ^{\rm EUKFC}
            (y_{k+1} - Y_{k+1} W).
        \label{eq:MUKF2_post_update}
\end{align}

The next result shows that EUKF-C reduces to KF in the case of a linear system.
\begin{proposition}
    \label{prop:MUKF2_is_KF}
    Consider a linear system \eqref{eq:LS_xkp1}, \eqref{eq:LS_yk}.
    For all $k\ge0,$ let $P_{k+1|k+1}$ be the posterior covariance given by Kalman filter
    and 
    let $P_{k+1|k+1}^{\rm EUKFC}$ be the posterior covariance given by EUKF-C.
    Let $k\ge 0,$ and assume that 
    \begin{align}
        P_{k|k}=P_{k|k}^{\rm EUKFC} 
        .
        \label{eq:PMUKF2_assumption}
    \end{align}
    Then, 
    \begin{align}
        K_{k+1}^{\rm EUKFC}
            &= 
                K_{k+1},
        \label{eq:MUKF2K_eq_KFP}
            \\
        P_{k+1|k+1}^{\rm EUKFC}
            &= 
                P_{k+1|k+1}.
        \label{eq:MUKF2P_eq_KFP}
    \end{align}
\end{proposition}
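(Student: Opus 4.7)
The plan is to mirror the structure of the proof of Proposition \ref{prop:MUKF1_is_KF}, but now leveraging the calculations already performed in the proof of Proposition \ref{prop:apply_UKF_to_LS}. Since EUKF-C uses exactly the same sigma-point construction as UKF (compare \eqref{eq:Psigma_MUKF2} with \eqref{eq:Psigma}), under the assumption \eqref{eq:PMUKF2_assumption} the propagated sigma points, and hence $\tilde X_{k+1}$ and $\tilde Y_{k+1}$, coincide with those computed in the proof of Proposition \ref{prop:apply_UKF_to_LS}. Thus, I would first invoke \eqref{eq:tildeX_lin_sys} and \eqref{eq:tildeY_lin_sys} directly.

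The next step is a short chain of substitutions. For the prior covariance, \eqref{eq:Px_MUKF2} is definitionally identical to \eqref{eq:Px_UKF}, so the UKF calculation already establishes $P_{k+1|k}^{\rm EUKFC} = A_k P_{k|k} A_k^\rmT + Q_k = P_{k+1|k}$. For the output covariance, I would start from \eqref{eq:Pz_MUKF2} and use the identity $\tilde Y_{k+1} W_\rmd \tilde Y_{k+1}^\rmT + R_{k+1} = P_{z_{k+1|k+1}} - C_{k+1} Q_k C_{k+1}^\rmT$ already obtained in \eqref{eq:UKF_Pz}; adding the correction $C_{k+1} Q_k C_{k+1}^\rmT$ from \eqref{eq:Pz_MUKF2} exactly cancels the missing term, yielding $P_{z_{k+1|k+1}}^{\rm EUKFC} = P_{z_{k+1|k+1}}$. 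Analogously for the cross-covariance, starting from \eqref{eq:Pez_MUKF2} and using \eqref{eq:UKF_Pez}, the correction $Q_k C_{k+1}^\rmT$ restores the missing term to give $P_{e,z_{k+1|k}}^{\rm EUKFC} = P_{e,z_{k+1|k}}$.

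Having matched all three KF covariance objects, the conclusion is immediate: \eqref{eq:MUKF2_opt_PzPez} reduces to \eqref{eq:KF_opt_gain_cov}, proving \eqref{eq:MUKF2K_eq_KFP}, and \eqref{eq:MUKF2_P_post_PzPez} reduces to \eqref{eq:KF_opt_cov_cov}, proving \eqref{eq:MUKF2P_eq_KFP}.

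There is essentially no hard step here; the entire proof is a verification that the two additive corrections in \eqref{eq:Pz_MUKF2} and \eqref{eq:Pez_MUKF2} are precisely the terms that Proposition \ref{prop:apply_UKF_to_LS} identified as missing from UKF in the linear case. The only place to be slightly careful is keeping track of which object each line refers to (EUKF-C versus KF versus the intermediate UKF expression), and noting that \eqref{eq:Psigma_MUKF2} does not require $A_k$ to be invertible, unlike \eqref{eq:Psigma_MUKF1}; so the statement of Proposition \ref{prop:MUKF2_is_KF} needs no additional nonsingularity hypothesis, and the proof proceeds uniformly.
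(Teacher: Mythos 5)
Your proposal is correct and follows essentially the same route as the paper's proof: both reduce to verifying that $P_{k+1|k}^{\rm EUKFC}$, $P_{z_{k+1|k+1}}^{\rm EUKFC}$, and $P_{e,z_{k+1|k}}^{\rm EUKFC}$ equal their Kalman-filter counterparts, from which \eqref{eq:MUKF2K_eq_KFP} and \eqref{eq:MUKF2P_eq_KFP} follow immediately. The only difference is presentational --- you reuse the identities \eqref{eq:UKF_Pz} and \eqref{eq:UKF_Pez} from Proposition \ref{prop:apply_UKF_to_LS} and observe that the added correction terms cancel the deficits identified there, whereas the paper redoes the substitution chain directly; your closing remark that no nonsingularity hypothesis on $A_k$ is needed is also accurate.
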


\begin{proof}
Note that 
\begin{align}
    \tilde X_{k+1} 
        &=
            A_k 
            \matl{ccc}
                0 &
                \alpha\sqrt{l_x P_{k|k}^{\rm EUKFC}} 
                &
                -\alpha\sqrt{l_x P_{k|k}^{\rm EUKFC}} 
            \matr
    \nn
\end{align}
and thus
\begin{align}
    \tilde X_{k+1} W_\rmd \tilde X_{k+1} ^\rmT 
        =
            A_k 
            P_{k|k}^{\rm EUKFC}
            A_k^\rmT.
    \label{eq:tildeX_mat_MUKF2}
\end{align}
Substituting \eqref{eq:tildeX_mat_MUKF2} in \eqref{eq:Px_MUKF2}-\eqref{eq:Pez_MUKF2} yields
\begin{align}
    P_{k+1|k}^{\rm EUKFC}
        &=
            A_k
            P_{k|k}^{\rm EUKFC}
            A_k^\rmT
            +
            Q_k
        \nn 
        \\
        &=
            P_{k+1|k},
        \label{eq:MUKF2_P_is_KF}
        \\
    P_{z_{k+1|k+1}}^{\rm EUKFC}
        &=
            C_{k+1} \tilde X_{k+1} W_\rmd \tilde X_{k+1}^\rmT C_{k+1}^\rmT 
            \neweqline
            + C_{k+1} Q_k C_{k+1}^\rmT
            + R_{k+1}
            \nn
        \\
        &=
            C_{k+1} 
            (
                A_k
                P_{k|k}^{\rm EUKFC}
                A_k^\rmT
                +
                Q_k
            )
            C_{k+1}^\rmT 
            + R_{k+1}
            \nn
        \\
        &=
            C_{k+1} P_{k+1|k}  C_{k+1}^\rmT + R_{k+1}
            \nn
        \\
        &=
            P_{z_{k+1|k+1}} ,
        \label{eq:MUKF2_Pz_is_KF}
    \\
    P_{e,z_{k+1|k}}^{\rm UKF}
        &=
            \tilde X_{k+1} W_\rmd \tilde Y_{k+1}^\rmT
            + Q_k C_{k+1}^\rmT
        \nn 
        \\
        &=
            \tilde X_{k+1} W_\rmd \tilde X_{k+1}^\rmT C_{k+1}^\rmT 
            + Q_k C_{k+1}^\rmT
        \nn 
        \\
        &=
            (A_k 
            P_{k|k}^{\rm EUKFC}
            A_k^\rmT
            + Q_k )C_{k+1}^\rmT
        \nn 
        \\
        &=
            P_{k+1|k} C_{k+1}^\rmT 
        \nn
        \\
        &=
            P_{e,z_{k+1|k}}.
    \label{eq:MUKF2_Peq_is_KF}
\end{align}
Equations \eqref{eq:MUKF2_P_is_KF}-\eqref{eq:MUKF2_Peq_is_KF} immidiately imply  \eqref{eq:MUKF2K_eq_KFP} and \eqref{eq:MUKF2P_eq_KFP}.
\end{proof}

\section{NUMERICAL EXAMPLES}
\label{sec:exmple}
In this section,  EUKF-A and EUKF-C  are applied to nonlinear systems. 
In order to compare the performance of EUKF-A and EUKF-C with UKF, EnKF is used to propagate the true posterior covariance.
EKF is also used to estimate the posterior covariance since EUKF-A and EUKF-C are expected to recover the performance of EKF. 



Note that, to apply EKF, EUKF-A, and EUKF-C to nonlinear systems, the dynamics matrix $A_k$ and the output matrix $C_k$ are approximated by
\begin{align}
    A_k 
        &=
            \left. \frac{\partial f}{\partial x} \right|_{\hat x_{k|k}},
        \quad 
    C_k 
        =
            \left. \frac{\partial g}{\partial x} \right|_{\hat x_{k|k}}.
\end{align}

\begin{exmp}
    \label{exmp:VDP}
    \textit{Van der Pol Oscillator}.
Consider the discretized Van der Pol Oscillator. 
\begin{align}
    x_{k+1} 
        &=
            f(x_k) + w_k,
\end{align}
where 
\begin{align}
    f(x)
        =
            \matl{c}
                x_1 + T_\rms x_2 \\
                x_2 + T_\rms (\mu (1-x_1^2) x_2-x_1)
            \matr.
\end{align}
Let the measurement be given by
\begin{align}
    y_k
        =
            C x_k + v_k,
\end{align}
where $C \isdef [1 \ 0].$
For all $k\ge0,$ let $Q_k = 0.01 I_2$  and $R_k = 10^{-4}.$
Furthermore, let $x(0) = [1 \  1 ]^\rmT$ and $P_{0|0} = I_3$.

Letting $\alpha = 1.5$ in UKF, EUKF-A, and EUKF-C, 
Figure \ref{fig:MUKF_VDP_Pk}a) shows the trace of the posterior covariance computed by EnKF, EKF, UKF, EUKF-A, and EUKF-C. 
The true covariance is assumed to be given by EnKF with 100,000 ensemble members. 
Note that UKF overestimates the EnKF posterior covariance, whereas EUKF-A and EUKF-C closely track the EnKF posterior covariance and recover the EKF posterior covariance.  
Figure \ref{fig:MUKF_VDP_Pk}b) shows the error of UKF, EKF, EUKF-A, and EUKF-C posterior covariance relative to EnKF. 
At the end of the simulation, UKF relative error is approximately $15\%$, whereas EUKF-A and EUKF-C relative error is less that $2\%.$
%

Figure \ref{fig:MUKF_VDP_zk_ek} shows the output error $z_{k|k}$ and the norm of the posterior error $e_{k|k}$ computed with all algorithms. 
Note that the output error and the posterior error are very close to each other. 
This example shows that the EUKF-A and EUKF-C posterior covariance estimate is more accurate than the UKF posterior covariance and is approximately equal to the EKF posterior covariance, however, the state estimates computed using all algorithms are almost equal.  
$\hfill\mbox{\Large$\diamond$}$
\end{exmp}

\begin{figure}[h]
    \centering
    \includegraphics[width = 1\columnwidth]{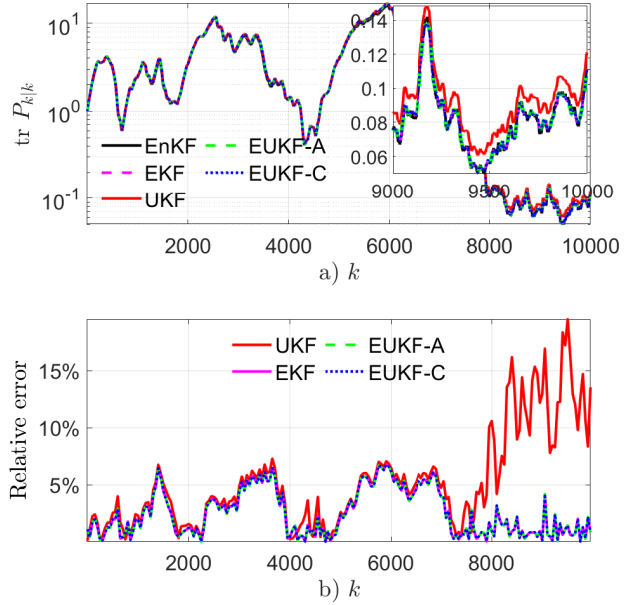}
    \caption{
    Example \ref{exmp:VDP}. Posterior covariance computed using EnKF, EKF, UKF,  EUKF-A, and EUKF-C.
    a) shows the trace of the posterior covariance on a log scale with a zoomed-in inset showing the last 1000 steps of the simulation. 
    b) shows the relative error in the posterior covariance.
    Note that UKF overestimates the EnKF posterior covariance, whereas EUKF-A and EUKF-C posterior covariances closely track the EnKF posterior covariance as shown by b). 
    }
    \label{fig:MUKF_VDP_Pk}
\end{figure}

\begin{figure}[h]
    \centering
    \includegraphics[width = 1\columnwidth]{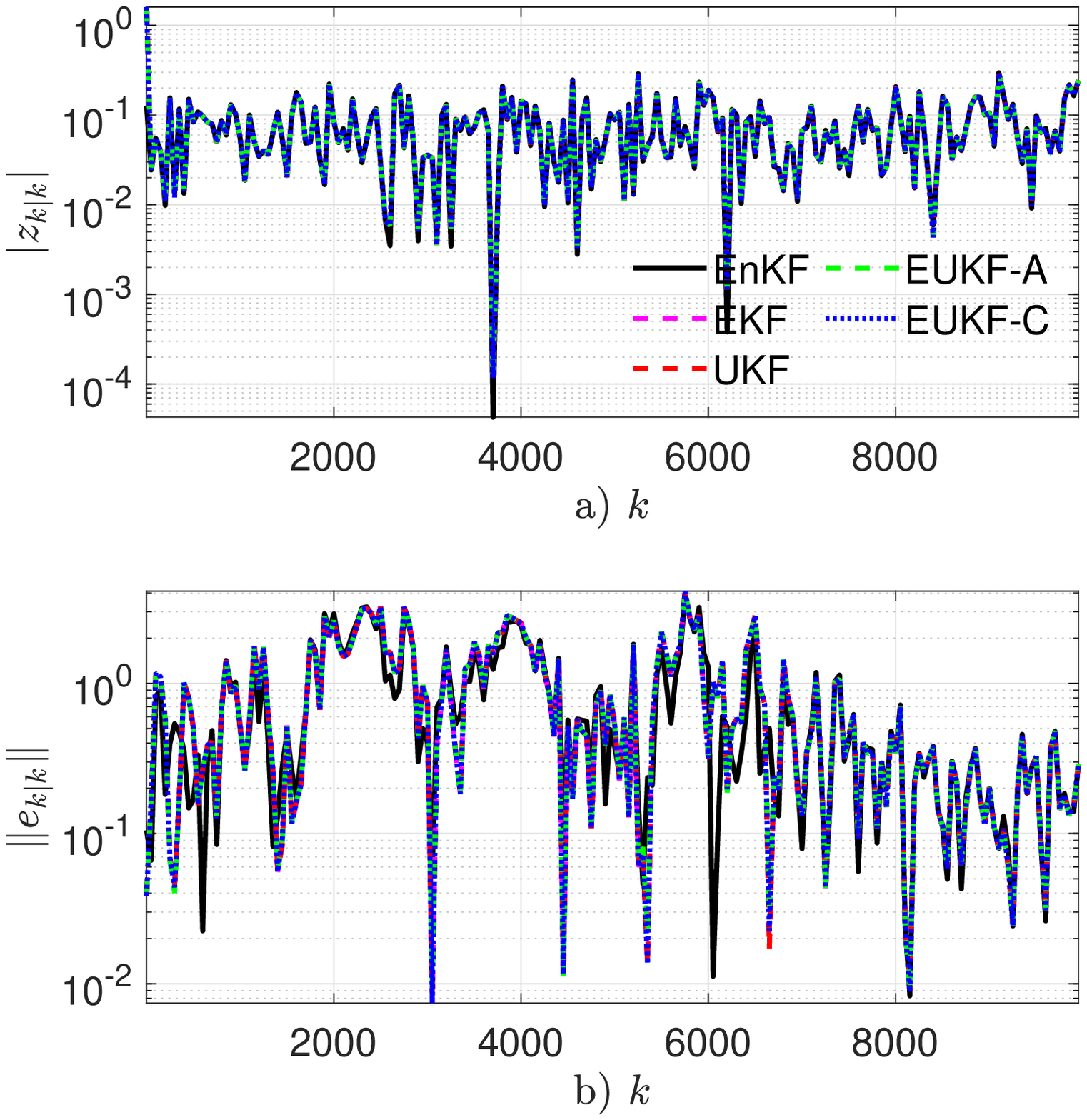}
    \caption{
    Example \ref{exmp:VDP}. Output error and posterior error computed using EnKF, EKF, UKF, EUKF-A, and EUKF-C.
    a) shows the output error and
    b) shows the norm of the posterior error. 
    Note that the output error and the posterior error computed by all algorithms are almost equal. 
    }
    \label{fig:MUKF_VDP_zk_ek}
\end{figure}

\begin{exmp}
    \label{exmp:Lorenz}
    \textit{Lorenz System}.
Consider the Lorenz system
\begin{align}
    \matl{c}
        \dot x_1 \\
        \dot x_2 \\
        \dot x_3
    \matr
        =
            \matl{c}
                \sigma (x_2-x_1) \\
                x_1(\rho-x_3)-x_2 \\
                x_1x_2 - \beta x_3
            \matr,
    \label{eq:Lorenz}
\end{align}
which exhibits a choatic behaviour for 
$\sigma = 10,$ $\rho = 28,$ and $\beta = 8/3$.
The Lorenz system \eqref{eq:Lorenz} is integrated using the forward Euler method with step size $T_\rms=0.01.$
Let the discrete system be modeled as
\begin{align}
    x_{k+1} 
        =
            f(x_k) + w_k,
\end{align}
where 
\begin{align}
    f(x) 
        \isdef 
            x+T_\rms 
            \matl{c}
                \sigma (x_2-x_1) \\
                x_1(\rho-x_3)-x_2 \\
                x_1x_2 - \beta x_3
            \matr
    \label{eq:Lorenz_disc}
\end{align}
and $w_k \sim \SN(0,Q_k).$
For all $k \ge 0$, let 
\begin{align}
    y_k 
        =
            C x_k + v_k,
    \label{eq:Lorenz_output}
\end{align} 
where
$C \isdef [0 \ 1 \ 0]$ and 
$v_{k} \sim \SN(0,R_k).$
For all $k\ge0,$ let $Q_k = 0.01 I_2$  and $R_k = 10^{-4}.$
Furthermore, let $x(0) = [1 \ 1 \ 1 ]^\rmT$ and $P_{0|0} = I_3$.
%

Letting $\alpha = 1.5$ in UKF, EUKF-A, and EUKF-C, 
Figure \ref{fig:MUKF_Lorenz_Pk}a) shows the trace of the posterior covariance computed by EnKF, EKF, UKF, EUKF-A, and EUKF-C. 
The true covariance is assumed to be given by EnKF with 100,000 ensemble members. 
Note that UKF overestimates the EnKF posterior covariance, whereas EUKF-A and EUKF-C closely track the EnKF posterior covariance and recover the EKF posterior covariance.  
Figure \ref{fig:MUKF_Lorenz_Pk}b)
shows the error of UKF, EKF, EUKF-A, and EUKF-C posterior covariance relative to EnKF. 
At the end of the simulation, UKF relative error is approximately $15\%$, whereas EUKF-A and EUKF-C relative error is less that $1\%.$

Figure \ref{fig:MUKF_Lorenz_zk_ek} shows the output error $z_{k|k}$ and the norm of the posterior error $e_{k|k}$ computed with all algorithms. 
Note that the output error and the posterior error are very close to each other. 
This example shows that the EUKF-A and EUKF-C posterior covariance estimate is more accurate than the UKF posterior covariance and is approximately equal to the EKF posterior covariance, however, the state estimates computed using all algorithms are almost equal. 
$\hfill\mbox{\Large$\diamond$}$
\end{exmp}

\begin{figure}[h]
    \centering
    \includegraphics[width = 1\columnwidth]{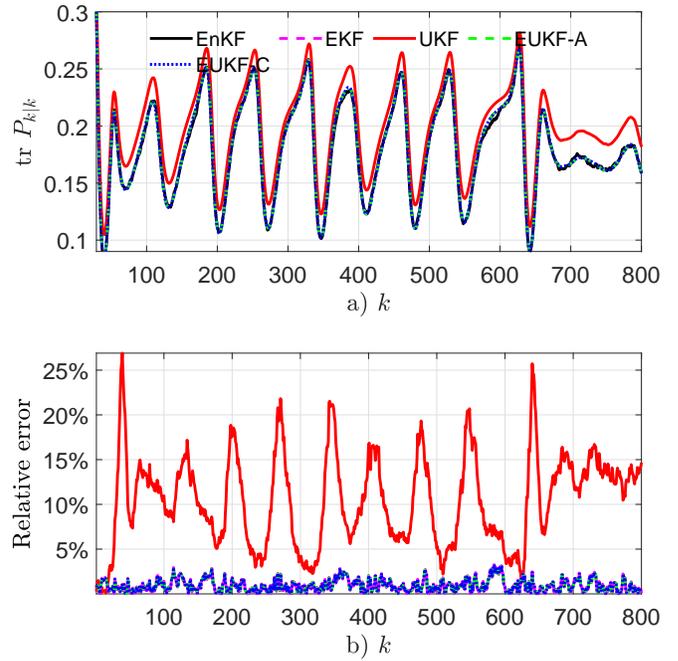}
    \caption{
    Example \ref{exmp:Lorenz}. 
    Posterior covariance computed using EnKF, EKF, UKF,  EUKF-A, and EUKF-C.
    a) shows the trace of the posterior covariance and
    b) shows the relative error in the posterior covariance.
    Note that UKF overestimates the EnKF posterior covariance, whereas EUKF-A and EUKF-C posterior covariances closely track the EnKF posterior covariance as shown by b). 
    }
    \label{fig:MUKF_Lorenz_Pk}
\end{figure}

\begin{figure}[h]
    \centering
    \includegraphics[width = 1\columnwidth]{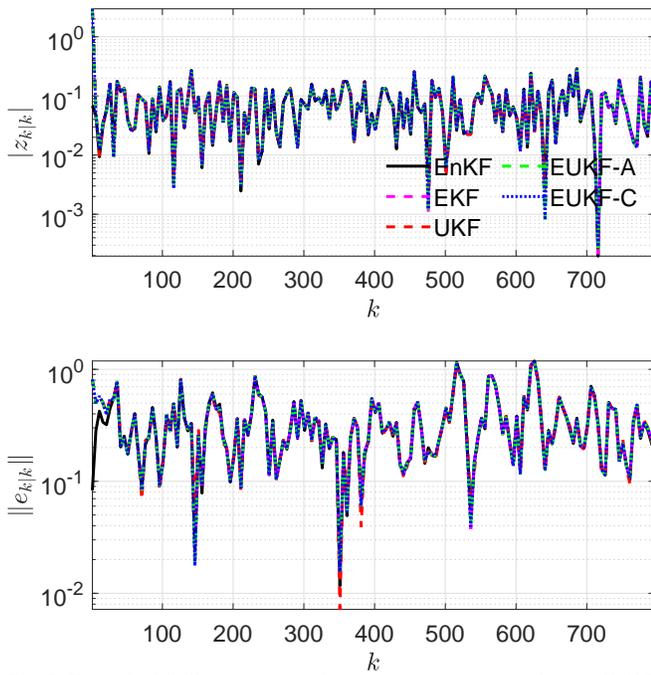}
    \caption{
    Example \ref{exmp:Lorenz}. Output error and posterior error computed using EnKF, EKF, UKF, EUKF-A, and EUKF-C.
    a) shows the output error and
    b) shows the norm of the posterior error. 
    Note that the output error and the posterior error computed by all algorithms are almost equal. 
    }
    \label{fig:MUKF_Lorenz_zk_ek}
\end{figure}

\section{CONCLUSIONS}
\label{sec:conclusion}

This paper presented two modifications of the UKF that specialize to the classical Kalman filter for linear systems. 
In linear systems, the two extensions are shown to be equivalent to Kalman filter.
In nonlinear systems, the two extensions provide more accurate estimate of the propagated posterior covariance in comparison to classical UKF as shown by the two numerical examples.
However, the accuracy of the state estimate is similar in all three filters. 



\renewcommand*{\bibfont}{\small}
\printbibliography

\end{document}